\newcommand{\Ahat}{\widehat{A}}
\newcommand{\alphah}{\widehat{\alpha}}
\newcommand{\Cc}{\mathbb{C}}
\newcommand{\e}[1]{\times 10^{#1}}
\newcommand{\eps}{\varepsilon}
\newcommand{\gh}{\widehat{g}}
\newcommand{\Hh}{\underline{H}}
\newcommand{\Hht}{\underline{\tilde{H}}}
\renewcommand{\Im}{\mathop{\mathrm{Im}}}
\newcommand{\Kk}{\mathcal{K}}
\newcommand{\Rr}{\mathbb{R}}
\newcommand{\Span}{\mathop{\mathrm{span}}}
\newcommand{\yh}{\widehat{y}}
\newcommand{\yt}{\widetilde{y}}
\newcommand{\zstat}{z_{\mathrm{stat}}}
\newtheorem{coroll}{Corollary}
\newtheorem{lemma}{Lemma}
\begin{document}

\begin{frontmatter}
  
\title{Exponential Krylov time integration\\
for modeling multi-frequency optical response\\ 
with monochromatic sources}

\author[kiam,sk]{M.A.~Botchev\corref{cor1}}
\ead{botchev@kiam.ru}

\author[scho,ut1]{A.M.~Hanse}
\ead{abelhanse@gmail.com}

\author[ut2]{R.~Uppu}
\ead{r.uppu@utwente.nl}

\cortext[cor1]{Corresponding author.  Work of this author is supported by 
the Russian Science Foundation Grant 17-11-01376.}

\address[kiam]{Keldysh Institute of Applied Mathematics,
Russian Academy of Sciences, Miusskaya~Sq.~4, 125047 Moscow,
Russia}

\address[sk]{Skolkovo Institute of Science and Technology, Skolkovo Innovation Center, Bldg~3,
143026 Moscow, Russia}

\address[scho]{International School Twente, Het Stedelijk Lyceum, 
               Tiemeister~20, 7541WG Enschede, the Netherlands}

\address[ut1]{Department of Applied Mathematics and MESA+ Institute for 
Nanotechnology, University of Twente, P.O.~Box 217, 7500~AE
Enschede, the Netherlands}

\address[ut2]{Complex Photonic Systems group (COPS) and MESA+ Institute for 
Nanotechnology, University of Twente, P.O.~Box 217, 7500~AE
Enschede, the Netherlands}


\begin{abstract}
Light incident on a layer of scattering material such as a piece of 
sugar or white paper forms a characteristic speckle pattern in 
transmission and reflection. The information hidden in the correlations 
of the speckle pattern with varying frequency, polarization and angle 
of the incident light can be exploited for applications such as 
biomedical imaging and high-resolution microscopy. Conventional 
computational models for multi-frequency optical response involve 
multiple solution runs of Maxwell's equations with monochromatic sources.  
Exponential Krylov subspace time solvers are promising candidates for 
improving efficiency of such models, as single monochromatic solution 
can be reused for the other frequencies without performing full 
time-domain computations at each frequency. However, we show that 
the straightforward implementation appears to have serious limitations. 
We further propose alternative ways for efficient solution through 
Krylov subspace methods. Our methods are based on two different 
splittings of the unknown solution into different parts, each 
of which can be computed efficiently.  Experiments demonstrate 
a significant gain in computation time with respect to the 
standard solvers.
\end{abstract}

\begin{keyword}
light scattering\sep
mesoscopic optics\sep
disordered media\sep
finite difference time domain (FDTD) methods\sep 
Krylov subspace methods\sep 
exponential time integration

\MSC[2008]
65F60 \sep 
65F30 \sep 
65N22 \sep 
65L05 \sep 
35Q61      
\end{keyword}
\end{frontmatter}

\section{Introduction}
Modeling of light propagation is an
important and ever-growing research area \cite{Chew2001,Taflove}. Understanding 
the propagation of light in a scattering medium has widespread 
applications such as real-time medical imaging,
spectrometry, and quantum security \cite{bertolotti2012non, katz2014non, 
yilmaz2015speckle, redding2013compact, goorden2014quantum}.
A complex scattering medium comprises of a regular or an irregular spatial 
arrangement of inhomogeneities in refractive index. While the former are 
complex fabricated structures known as photonic crystals, we experience the 
latter in common materials such as skin, sugar, and white paint.
Light propagating through a layer of such random scattering media
undergoes multiple scattering off the inhomogeneities resulting in a complex 
interference pattern, called the 
speckle pattern \cite{akkermans2007mesoscopic}.
The seemingly random speckle patterns
possess rich correlations which depend on parameters such as the frequency, 
angle of incidence and spectral content of the propagating light 
\cite{freund1988memory, van1999multiple, dogariu2015electromagnetic, 
Gigan2016}. These correlations help in revealing fundamental 
light transport properties of the medium which are instrumental
for different applications.

An important modeling approach to analyze frequency correlations of speckle 
pattern is through the so-called broadband pulse excitation (BPE).
Mathematically, this modeling approach solves the time-dependent
Maxwell equations with a source function (representing
the incident light) taken as a short Gaussian pulse in time. The frequency 
bandwidth of the pulse is inversely related to the temporal width of the pulse, 
therefore shorter the pulse, broader the frequency bandwidth. The speckle
patterns at different frequencies can then the obtained by taking a 
Fourier-transform of the optical response (which
is the resulting electromagnetic field after time $T$ of computation when the 
incident pulse has decayed).
A drawback of the BPE approach is that the accuracy of the results is subject 
to the time-window of computation, which is necessarily short for broadband 
light.  The separation of single frequency response from the total response 
will be compromised by the frequency resolution of Fourier-transformed fields. 
In fact, the BPE can be seen a compromise between accuracy 
(increasing if the frequency bandwidth gets narrower) and
efficiency (increasing if the frequency bandwidth gets broader).
To avoid this potential loss in accuracy in BPE simulations or to verify the 
BPE results, time-domain computations
with pulses of a single frequency have to be carried out.

This paper explores a way to utilize the single frequency time-domain 
computations for efficiently computing the multi-frequency response with the 
help of the Krylov subspace exponential time integration. These methods are 
based on a projection (onto the Krylov subspace) which is carried out 
independently of the frequency in the source term. Hence, the same projection 
is used to obtain a small-dimensional projected problem for optical response at 
a different frequency. Therefore, the Krylov subspace methods should be 
computationally more efficient in comparison to multiple single frequency 
computations. Moreover, the projection methods achieve the computational 
efficiency without compromising the accuracy unlike BPE. 

Exponential time integration has received much attention in the recent
decades. These methods involve actions of the matrix functions,
such as matrix exponential and matrix cosine. The efficiency of the methods
depends on their implementation.  
For large matrices, methods to compute matrix function actions
on a vector include Krylov subspace methods (based on Lanczos or Arnoldi
processes), Chebyshev polynomials (primarily for Hermitian
matrices),
and scaling and squaring with Pad\'e or Taylor 
approximations and some other 
methods~\cite{TalEzer89,DeRaedt03,SchmelzerTrefethen07,CaliariOstermann09,%
AlmohyHigham2011}.
In this paper we use Krylov subspace methods for computing
the matrix exponential actions on vectors; these methods
are suitable for non-Hermitian matrices,
have been significantly improved 
recently~\cite{DruskinKnizh98,MoretNovati04,EshofHochbruck06,PhD_Guettel} 
and successfully applied to solving time-dependent Maxwell
and wave 
equations~\cite{DruskinRemis2013,BornerErnstGuettel2015,%
Hochbruck_Pazur_ea2015,Botchev2016}.

We discuss implementation of our Krylov subspace exponential
integration method in detail in Section~\ref{sect:Kr1}. Unfortunately, this approach 
appears to have its limitations. To circumvent the limitations, we also design 
other solution strategies with the Krylov subspace exponential time 
integration. A key idea which leads to a very competitive method is to utilize 
the time asymptotic behavior of the solution in the Krylov subspace framework.
Another approach we develop is based on a splitting of the problem 
into a number of easier to solve homogeneous problems and nonhomogeneous
problems which appear to be identical due to the periodicity of the source term.

The rest of this article is organized as follows.
A precise formulation of the problem is discussed in 
Section~\ref{sect:problem}. The existing and proposed solution methods are 
presented in 
Section~\ref{sect:solvers}. We discuss numerical experiments in 
Section~\ref{sect:experim}, which is followed by conclusions.

\section{Problem formulation and current solution methods}
\label{sect:problem}
\subsection{Modeling multi-frequency optical response in
scattering medium}
We consider the scattering material to compose of infinitely long cylinders of 
radius $r$ which are randomly spaced with the minimum distance $d$ = $2r$. The 
symmetry of the scattering medium along $z$-axis (i.e., along the longitudinal 
axis of the cylinders) can be used to solve a two-dimensional computational 
model. The scattering material extends from x = $a_x$ to x = $b_x$ with perfect 
electric conductors at y = $a_y$ to y = $b_y$. The incident light originates 
from a point electric dipole. The propagation
of light is modeled by solving time-dependent Maxwell 
equations using finite-difference time-domain methods.  After a relatively long 
time, the light transmits through the scattering layer, where its intensity 
eventually approaches a
time asymptotic regime, i.e., becomes a time 
periodic function.  The result of the computations is the 
intensity speckle pattern formed in the transmission of the scattering layer.
The aim of the modeling is to study correlations of the speckle
patterns depending on the frequency $\omega$ (to be precisely defined 
later in this section) of the incident light.

\added{To numerically solve the Maxwell equations, we first make the equations 
dimensionless.  We follow
the standard dimensionless procedure as described, e.g., in
\cite{Botchev2016}.}
In the two-dimensional computational model, the unknown field components are 
$H_x(x,y)$, $H_y(x,y)$ (magnetic field) and
$E_z(x,y)$ (electric field) as the incident light originates from a point 
electric dipole.  Although the problem is two-dimensional,
the size of the discretized problem has to be large (in this paper
up to $\approx 5\cdot 10^6$ degrees of freedom) 
to resolve the inhomogeneities in the scattering medium.
More precisely, we solve the two-dimensional Maxwell
equations in a domain $(x,y)\in[a_x,b_y]\times[a_y,b_y]$ with 
perfectly electric conducting boundary conditions 
($E_z=0$) at the $y$-boundaries and nonreflecting
boundary conditions at the $x$-boundaries.
The nonreflecting boundary conditions are implemented 
as the perfectly matched layers (PMLs) and the \added{resulting 
dimensionless} Maxwell
equations read:
\begin{equation}
\label{mxw}
\begin{aligned}
\frac{\partial H_{x}}{\partial t} & 
= -\frac{1}{\mu_{r}} 
\frac{\partial E_{z}}{\partial y}, 
\\
\frac{\partial H_{y}}{\partial t} & 
= \frac{1}{\mu_{r}} \frac{\partial E_{z}}{\partial x} - \sigma_{x} H_{y} , 
\\
\frac{\partial E_{z}}{\partial t} & = \frac{1}{\epsilon_{r}} 
\left( \frac{\partial H_{y}}{\partial x} - \frac{\partial H_{x}}{\partial y}  
- J_{z}\right) - \sigma_x E_{z} + P,
\\
\frac{\partial P}{\partial t} & = -\dfrac{\sigma_x}{\epsilon_r}
\frac{\partial H_{x}}{\partial y},
\end{aligned}
\end{equation}
where $H_x$, $H_y$, $E_z$ are the unknown field components,
\deleted{$\mu_0$ and $\epsilon_0$ are respectively the magnetic permeability and 
electric permittivity of the vacuum, and}
$\mu_r$ and $\epsilon_r$ are the relative permeability and 
relative permittivity, respectively\added{, and
$P=P(x,y,t)$ is an auxiliary PML variable}.  \added{The details of the derivation of
the PML boundary conditions can be found in~\cite{Johnson2010_PML,Hanse2017_MSc}.}
In our case $\mu_r\equiv 1$ throughout the 
domain of interest and $\epsilon_r(x,y)$ is a piecewise constant
function representing the inhomogeneities in the scattering medium.  
Furthermore, the damping terms \replaced{$\sigma_x H_y$}{$\sigma_m H_{x,y}$}
(which \replaced{is}{are} nonphysical) and $\sigma_x E_z$ are
due to the PML boundary conditions. $\sigma_{x}$ \replaced{is}{are}
nonzero only in the so-called PML regions (situated just outside
of the domain $[a_x,b_x]\times[a_y,b_y]$ along the $x$-boundaries,
where the field is damped).  
Finally,
\begin{equation}
\label{src}
J_z(x,y,t)=\alpha(t) J(x,y), \qquad \alpha(t)=\sin(2\pi\omega t)
\end{equation}
is the source term, 
where $J(x,y)$ is nonzero only at the boundary $x=a_x$.
At the initial time $t=0$ initial conditions 
\begin{equation}
\label{ic1}  
H_x = 0, \quad H_y = 0, \quad E_z = 0\added{,\quad P=0}
\end{equation}
are imposed.
\deleted{In addition to equations~\eqref{mxw}, we have auxiliary
PML equations governing the conductivity values $\sigma_m(x,y)$
and $\sigma_e(x,y)$, see e.g.~\cite{Johnson2010_PML} for details.}

To solve~\eqref{mxw} with additional PML equations numerically, 
we \deleted{first make the equations dimensionless (with footnote: ``We follow
the standard dimensionless procedure as described, e.g., in
\cite{Botchev2016}.'')
and then} follow the method of lines approach, i.e., we first discretize
the equations in space.  In this paper, we use the standard 
finite-difference Yee space discretization, where the 
electric field values are situated at the grid vortices and 
the magnetic field values at the centers of the grid edges.
Alternatively, any other suitable space discretization can
be used for this problem, for instance, vector N\'ed\'elec finite 
elements (see e.g.~\cite{BotchevVerwer09})
or discontinuous Galerkin finite elements (see e.g.~\cite{Sarmany_ea2013}).
The space discretization then results in a system of ordinary
differential equations (ODEs)
\begin{equation}
\label{ode}
\left\{\begin{aligned}
y'(t) &= -A y(t) + \alpha(t) g,
\\
y(0)  &= v,
\end{aligned}\right.
\qquad
y = 
\begin{bmatrix}
\bar{H}_x \\ \bar{H}_y \\ \bar{E}_z \\ \bar{P}  
\end{bmatrix},
\quad
A = 
\begin{bmatrix}
\Ahat & B_1^T \\
-B_2    &  0  
\end{bmatrix},
\quad 0<t<T,
\end{equation}
where $\bar{H}_x$, $\bar{H}_y$, $\bar{E}_z$, \replaced{$\bar{P}$}{$y_{\textsc{pml}}$}
are the grid values of the unknown fields and the auxiliary
PML variable\deleted{s} and 
$\Ahat$ is the Maxwell operator matrix corresponding
to the space discretized equations~\eqref{mxw}, 
$$
\Ahat = \begin{bmatrix}
M_\mu^{-1}  & 0 \\
0         & M_\epsilon^{-1}
\end{bmatrix}
\begin{bmatrix}
M_{\sigma_x}  &  K \\
- K^T       &  M_{\sigma_x}
\end{bmatrix}.
$$
Here $M_{\epsilon,\mu,\sigma_{x}}$ are diagonal matrices containing
the grid values of $\epsilon$, $\mu$, $\sigma_{x}$, respectively,
and $K$ is a discretized two-dimensional curl operator.
In our problem, due to~\eqref{ic1}, the initial value $v$ is zero
but, for the purpose of presentation, we prefer to write $v$ there instead 
of a zero vector.

In the remainder of the paper we denote the total dimension of the 
problem~\eqref{ode} by $n$, i.e., $A\in\Rr^{n\times n}$.  
It can be shown that the eigenvalues of $\Ahat$ have nonnegative 
real part, see e.g.~\cite{BotchevVerwer09}.
We assume that the property holds for the matrix~$A$,
as is shown numerically in~\cite{Botchev2016}.

\section{Solution methods}
\label{sect:solvers}

\subsection{Standard finite difference time domain methods (FDTD)}
A well established and widely used class of methods to 
model electromagnetic scattering~\eqref{ode} is the finite-difference 
time-domain
methods~\cite{Taflove}.  In the framework of the method of lines,
these methods essentially imply a finite difference approximation
in space (often employing the Yee cell~\cite{Yee66}) and 
a subsequent application of a time integration method.
The celebrated Yee scheme is an example of this approach,
where the time discretization is based on a second order
symplectic composition scheme~\cite{BotchevVerwer09}.
This compact, energy preserving time integration scheme can 
be viewed as a combination of the staggered leap-frog scheme for the wave
terms (represented by the matrices $K$ and $K^T$ in~\eqref{ode})
and the implicit trapezoidal scheme (ITR) for the damping
terms $M_{\sigma_{x}}$.  However, when applied to the problems
with PML boundary conditions, the method loses its clarity
and represents essentially an ad-hoc splitting implicit-explicit scheme 
where the stiff PML terms are treated implicitly.

Our problem is two-dimensional and, hence, although large, can 
be efficiently treated by fully implicit FDTD schemes.  The arising
linear systems can then be solved by sparse direct solvers,
which have been significantly improved last 
decades~\cite{UMFPACK1,UMFPACK2,Duff_ea2017book}.
In addition, an employment of an implicit scheme is simple 
and removes the necessity to handle the PML terms in
a special way.  An implicit scheme which is very suitable
for the Maxwell equations is the classical ITR 
(also known as the Crank--Nicolson scheme).  
The scheme is second-order accurate, does not introduce artificial
damping and preserves energy~\cite{VerwerBotchev09}.
Therefore, in this paper we choose ITR to be the reference FDTD 
method. 
For our problem~\eqref{ode}
it reads
$$
\frac{y^{k+1} - y^{k}}{\tau} = 
-\frac12 A y^k -\frac12 A y^{k+1} 
+ \frac12 {\alpha^k}g  
+ \frac12 {\alpha^{k+1}}g,
$$
where $\tau>0$ is the time step size and the superindex $k$
indicates the time step number.
At every time step a linear system with a matrix
$I+\frac12\tau A$ has to be solved.

\subsection{Krylov subspace methods, basic facts}
\label{sect:Kr1}
Let $y_0(t)$ be an approximation (initial guess) to the solution
$y(t)$ of~\eqref{ode}.
Let us define the error (unknown in practice) and the 
residual of $y_0(t)$ as 
(cf.~\cite{CelledoniMoret97,DruskinGreenbaumKnizhnerman98,%
KnizhnermanSimoncini09})
\begin{equation}
\label{err_res}
\begin{aligned}
\eps_0(t) & := y(t) - y_0(t),
\\
r_0(t) & := -Ay_0(t) - y_0'(t) + \alpha(t) g.
\end{aligned}
\end{equation}
We assume that the residual $r_0(t)$ of the initial guess $y_0(t)$ 
is known.  This can be 
achieved, for instance, by taking $y_0(t)$ to be a constant 
function equal to the initial value $v$:
\begin{equation}
\label{rank1}  
r_0(t) = -Av + \alpha(t) g = \alpha(t) g,
\end{equation}
where the last step obviously holds only if $v=0$.
Note that the initial residual turns out to be 
a time dependent scalar function times a constant vector.
Furthermore, if $y_0(t)$ satisfies the initial condition $y_0(0)=v$,
we have
\begin{equation}
\label{corr1}
\eps_0'(t) = -A\eps_0 + r_0(t), \quad \eps_0(0) = 0.
\end{equation}
Starting with $y_0(t)$, Krylov subspace methods obtain 
a better solution $y_m(t)$ by solving~\eqref{corr1}
approximately:
\begin{equation}
\label{upd1}
y_m(t) = y_0(t) + \tilde{\eps}_0(t).
\end{equation}
Here $\tilde{\eps}_0(t)\approx\eps_0(t)$ is the approximate solution of~\eqref{corr1}
obtained by $m$ Krylov steps.
In this paper we use a regular Krylov 
subspace method~\cite{SaadBook2003,Henk:book}
as well as a rational shift-and-invert (SaI) Krylov subspace 
method~\cite{MoretNovati04,EshofHochbruck06}.  The two methods employ the
Arnoldi process to produce, after $m$~steps, 
upper-Hessenberg matrices $\Hh_m$, $\Hht_m$,
respectively, and an $n\times m$ matrix  
$$
V_m = 
\begin{bmatrix}
v_1 & \dots & v_m  
\end{bmatrix}\in\Rr^{n\times m},
$$
such that $V_m^*V_m$ is the $m\times m$ identity matrix and 
\begin{equation}
\label{Arn}
\begin{aligned}
\text{either}\quad AV_m &= V_{m+1}\Hh_m   \quad\text{(for regular Krylov 
method)},
\\
\text{or}\quad (I+\gamma A)^{-1}V_m &= V_{m+1}\Hht_m   
\quad\text{(for SaI Krylov method)}.
\end{aligned}
\end{equation}
Here $\gamma>0$ is a parameter whose choice is discussed later.
Note that for simplicity of presentation we slightly abuse notation in these 
last 
two relations: the matrix $V_{m+1}$ produced by the regular Krylov method
and appearing in the former relation
is different from $V_{m+1}$ 
produced by the SaI Krylov method and appearing in the latter relation.
Precise definition of the Arnoldi process is not given here as 
it can be found in many books, e.g., 
in~\cite[Algorithm~6.1]{SaadBook2003} or~\cite[Figure~3.1]{Henk:book}.
  
If the regular Krylov method is used, the columns of the matrix $V_m$ 
span the Krylov subspace 
$\Kk_m(A,g) \equiv\Span (g, Ag, A^2g, \dots, A^{m-1} g )$, i.e., 
\begin{equation}
\label{Krylov}
\Span(v_1,\dots,v_m)=\Kk_m(A,g), \qquad v_1=g/\|g\|  
\end{equation}
with $v_1$ being the normalized stationary part of the initial 
residual~\eqref{rank1}.  For the SaI method, the relations above hold with $A$ 
replaced by $(I+\gamma A)^{-1}$.
The SaI transformation results in a better approximation
in the Krylov subspace of the eigenmodes corresponding to
the small in modulus eigenvalues~\cite{MoretNovati04,EshofHochbruck06} 
and, hence, is favorable
for solving the time dependent problem~\eqref{corr1}.
Indeed, for some classes of the discretized differential operators $A$\added{, such
as the discretizations of parabolic PDEs with a numerical range along the 
positive real axis,}
one can show that the convergence of the SaI methods is mesh 
independent~\cite{EshofHochbruck06,GoecklerGrimm2014}.
\added{Although these results can not be extended to wave-type equations in a straightforward
manner, a mesh independent convergence} is observed
in practice for the Maxwell equations with damping \added{in}~\cite{Botchev2016}.

For the SaI Krylov method we define the matrix $H_m$ 
as the inverse shift-and-invert transformation:
\begin{equation}
\label{saiHm}
H_m = \dfrac1\gamma (\tilde{H}_m^{-1} - I).
\end{equation}
The approximate Krylov solution of the correction 
system~\eqref{corr1} can then be written 
for both Krylov methods as
\begin{equation}
\label{corr2}
\tilde{\eps}_0(t) = V_m u(t),
\qquad
\left\{
\begin{aligned}
u'(t) &= - H_m u(t) + \alpha(t)\beta e_1,
\\  
u(0) &= 0,
\end{aligned}
\right.  
\end{equation}
where $e_1=[1,0,\dots,0]^T\in\Rr^m$, $\beta=\|g\|$ and the ODE system in $u(t)$ 
is merely a Galerkin projection of~\eqref{corr1}
onto the Krylov subspace.

The two relations in~\eqref{Arn} are called Arnoldi 
decompositions.
It is convenient to use them after re-writing in an equivalent form  
\begin{equation}
\label{Arn1}
\begin{gathered}
AV_m = V_m H_m + R_m,
\\
R_m =
\begin{cases}
h_{m+1,m}v_{m+1}e_m^T
\quad &\text{(for regular Krylov method)},
\\
-\dfrac{h_{m+1,m}}{\gamma}(I+\gamma A)v_{m+1}e_m^T\tilde{H}_m^{-1}
\quad &\text{(for SaI Krylov method)},
\end{cases}
\end{gathered}
\end{equation}
where $e_m=[0,\dots,0,1]^T\in\Rr^m$.
This last form of the Arnoldi decompositions 
emphasizes the fact that the Krylov subspace can
be seen, if $R_m$ is small in norm, as an approximate invariant
subspace of $A$.

In both regular and SaI Krylov subspace methods, we control the number
of Krylov iterative steps $m$ (which is also the dimension
of the Krylov subspace) by checking the residual 
$r_m(t)=-y_m'(t)-Ay_m(t)+\alpha(t)g$ 
of the approximation $y_m(t)$ in~\eqref{upd1}.
As stated 
in~\cite{CelledoniMoret97,DruskinGreenbaumKnizhnerman98,BGH13}, 
the residual $r_m(t)$ can easily be computed as follows.

\begin{lemma}\cite{BGH13}
In the regular and SaI Krylov subspace methods~\eqref{corr1}--\eqref{corr2}
the residual $r_m(t)=-y_m'(t)-Ay_m(t)+\alpha(t)g$ 
of the approximate solution $y_m(t)$ to system~\eqref{ode}  
satisfies
\begin{equation}
\label{resid}
r_m(t) = 
\begin{cases}
-h_{m+1,m} e_m^Tu(t) v_{m+1}
\quad &\text{(for regular Krylov method)},
\\  
\dfrac{h_{m+1,m}}{\gamma} e_m^T \tilde{H}_m^{-1} u(t) (I+\gamma A)v_{m+1}
\quad &\text{(for SaI Krylov method)}.
\end{cases}
\end{equation}
\end{lemma}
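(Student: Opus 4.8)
The lemma claims that the residual $r_m(t) = -y_m'(t) - Ay_m(t) + \alpha(t)g$ of the Krylov approximation has an explicit closed form given by the two formulas. The key inputs I can use: the update $y_m = y_0 + V_m u$, the residual of the initial guess $r_0(t) = \alpha(t)g$ (from \eqref{rank1} with $v=0$), the projected ODE \eqref{corr2}, and the Arnoldi decomposition in form \eqref{Arn1}.

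Let me set up the computation. We have $y_m(t) = y_0(t) + V_m u(t)$ where $y_0 \equiv v = 0$... wait, but they write $v$ generally. Let me be careful.

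Actually let me think about this more generally, since they note they write $v$ for presentation.\section*{Proof plan}

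\textbf{Setup.} The plan is to compute $r_m(t)$ directly from the definition by inserting the Krylov update and using the Arnoldi decomposition. Write $y_m(t) = y_0(t) + V_m u(t)$ from~\eqref{upd1} and~\eqref{corr2}. Since $y_0$ is the constant function equal to $v$, its residual is exactly the stationary residual~\eqref{rank1}, namely $r_0(t) = -Ay_0(t) - y_0'(t) + \alpha(t)g = \alpha(t)g$. Now substitute into the definition of the residual of $y_m$ and split off the $y_0$ part:
\begin{align*}
r_m(t)
&= -y_m'(t) - A y_m(t) + \alpha(t) g \\
&= \bigl(-y_0'(t) - A y_0(t) + \alpha(t) g\bigr) - V_m u'(t) - A V_m u(t) \\
&= r_0(t) - V_m u'(t) - A V_m u(t),
\end{align*}
where I used that $V_m$ is constant in $t$ so $(V_m u)' = V_m u'$. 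Recall $r_0(t)=\alpha(t)g=\alpha(t)\beta v_1 = \alpha(t)\beta V_m e_1$, since $v_1=g/\|g\|$ and $\beta=\|g\|$.

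\textbf{Main step: invoke the Arnoldi decomposition.} The key substitution is the form~\eqref{Arn1}, $AV_m = V_m H_m + R_m$, which turns the $AV_m u$ term into an in-subspace piece $V_m H_m u$ plus the out-of-subspace remainder $R_m u$. This gives
\begin{align*}
r_m(t)
&= \alpha(t)\beta V_m e_1 - V_m u'(t) - V_m H_m u(t) - R_m u(t) \\
&= V_m\bigl(-u'(t) - H_m u(t) + \alpha(t)\beta e_1\bigr) - R_m u(t).
\end{align*}
Now the parenthesized expression vanishes identically: it is exactly the projected ODE residual, and by~\eqref{corr2} we have $u'(t) = -H_m u(t) + \alpha(t)\beta e_1$, so $-u' - H_m u + \alpha\beta e_1 = 0$. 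Hence the entire in-subspace contribution cancels and we are left with the clean identity $r_m(t) = -R_m u(t)$. This cancellation is the heart of the lemma: the Galerkin condition built into~\eqref{corr2} is precisely what kills the projected part, leaving only the rank-one Arnoldi remainder.

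\textbf{Finishing: read off the two cases.} It remains to substitute the two expressions for $R_m$ from~\eqref{Arn1} and simplify using $e_m^T u(t)$ as a scalar. For the regular Krylov method $R_m = h_{m+1,m} v_{m+1} e_m^T$, so
$$
r_m(t) = -R_m u(t) = -h_{m+1,m}\, v_{m+1}\, e_m^T u(t) = -h_{m+1,m}\, (e_m^T u(t))\, v_{m+1},
$$
matching the first case. For the SaI method $R_m = -\tfrac{h_{m+1,m}}{\gamma}(I+\gamma A)v_{m+1} e_m^T \tilde{H}_m^{-1}$, giving
$$
r_m(t) = -R_m u(t) = \frac{h_{m+1,m}}{\gamma}\,\bigl(e_m^T \tilde{H}_m^{-1} u(t)\bigr)\,(I+\gamma A)v_{m+1},
$$
which is the second case. (Note $e_m^T\tilde H_m^{-1}u(t)$ is a scalar, so it commutes past the vector $(I+\gamma A)v_{m+1}$.) This completes the derivation. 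The only point requiring care — and the step I would expect to be the main obstacle if one is not handed~\eqref{Arn1} — is the passage between the raw Arnoldi form~\eqref{Arn} and the remainder form~\eqref{Arn1} in the SaI case, where the shift-and-invert transformation~\eqref{saiHm} must be unwound correctly; here that bookkeeping is already done for us in the excerpt, so the proof reduces to the substitution and cancellation above.
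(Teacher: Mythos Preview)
Your proof is correct and follows essentially the same route as the paper: express $r_m$ in terms of $r_0$ and the correction $V_m u$, apply the Arnoldi decomposition $AV_m=V_mH_m+R_m$, and use the projected ODE~\eqref{corr2} to kill the in-subspace part, leaving $r_m(t)=-R_m u(t)$. In fact your intermediate sign is the correct one; the paper's displayed line has a typo ($+R_m u(t)$ instead of $-R_m u(t)$), though its final formulas in~\eqref{resid} are consistent with your computation.
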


\begin{proof}
For a detailed proof and discussion we refer to~\cite{BGH13}.
However, the problem considered there
is slightly different than~\eqref{ode}; it is of the form~\eqref{ode}
with $g=0$ and $v\ne 0$.
Therefore, for completeness of the presentation we give
a short proof of~\eqref{resid} here.
Based on~\eqref{corr1},\eqref{upd1}, we see that
$$
r_m(t)=-y_m'(t)-Ay_m(t)+\alpha(t)g = 
-\tilde{\eps}_0'(t)-A\tilde{\eps}_0(t) + r_0(t).
$$
Substituting $\tilde{\eps}_0=V_mu(t)$ into this relation
using the Arnoldi decomposition~\eqref{Arn1}, we obtain
$$
r_m(t) = V_m(-u'(t)-H_mu(t) +\alpha(t)\beta e_1) + R_mu(t)=
R_m u(t).
$$
\end{proof}

The methods presented in this subsection up to this point are 
well known, see 
e.g.~\cite{CelledoniMoret97,DruskinGreenbaumKnizhnerman98,BGH13}.  
A first, rather simple but, nevertheless,
important conclusion which can be drawn from the presentation is 
as follows.

\begin{coroll}
\label{res:no_rst}
The regular and SaI Krylov subspace methods~\eqref{corr1}--\eqref{corr2}
for solving the multi-frequency optical response problem~\eqref{ode}
are fully independent of the source time component $\alpha(t)$.  

In other words, if the problem~\eqref{ode} has to be solved for
many different $\alpha(t)$, the matrices $V_m$ and either 
$\Hh_m$ or $\Hht_m$ can be computed once and used for all
$\alpha(t)$.  Only the small projected problem~\eqref{corr2}
has to be solved for each new $\alpha(t)$ again.
\end{coroll}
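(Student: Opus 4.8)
The plan is to separate the Krylov construction into an $\alpha$-independent \emph{offline} phase, which produces the basis $V_m$ together with the projected matrices, and an \emph{online} phase, which solves the small system~\eqref{corr2}; the claim then reduces to verifying that only the online phase carries any dependence on $\alpha(t)$.

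First I would isolate the single point at which the source time component could enter the Arnoldi process. By~\eqref{rank1} the initial residual factorizes as $r_0(t)=\alpha(t)g$, that is, a scalar function of time times the \emph{fixed} vector $g$. The Arnoldi process underlying both variants is seeded by the normalized stationary part $v_1=g/\|g\|$ of this residual and advances by repeated multiplication with $A$ (regular method) or with $(I+\gamma A)^{-1}$ (SaI method); see~\eqref{Krylov} and~\eqref{Arn}. Neither the seed $v_1$ nor the iteration operator references $\alpha(t)$. Consequently the Krylov subspace $\Kk_m(A,g)$, the orthonormal basis $V_m$, and the upper-Hessenberg matrices $\Hh_m$ or $\Hht_m$ (and hence $H_m$ via~\eqref{saiHm}) are determined solely by $A$, by $g$, and, in the SaI case, by the shift $\gamma$.

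Next I would locate where $\alpha(t)$ does appear. Inspecting~\eqref{corr2}, the source enters exclusively through the forcing term $\alpha(t)\beta e_1$, in which $\beta=\|g\|$ and $e_1\in\Rr^m$ are fixed. Thus replacing $\alpha(t)$ by a different time profile leaves the coefficient matrix $H_m$ of the $m\times m$ system untouched and modifies only its inhomogeneity. Since the full approximation is assembled as $y_m(t)=y_0(t)+V_m u(t)$ from the reusable basis $V_m$ and the freshly computed $u(t)$, the expensive Arnoldi factorization is shared across all frequencies while only the inexpensive low-dimensional problem is re-solved, which is exactly the assertion.

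There is no genuine analytical obstacle here: the statement is a direct structural consequence of the factorization~\eqref{rank1} of the initial residual into a scalar time function times a constant spatial vector. The only point requiring care is to confirm that \emph{no} step of the Arnoldi recurrence~\eqref{Arn}--\eqref{Arn1} implicitly invokes the time component, and this is immediate once one observes that the entire time dependence has been confined to the scalar $\alpha(t)$ multiplying the fixed vector $g$.
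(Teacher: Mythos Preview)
Your argument is correct and is essentially the same as the paper's: the Arnoldi process is seeded by the spatial vector $g$ and driven only by $A$ (or $(I+\gamma A)^{-1}$), so $V_m$, $\Hh_m$, $\Hht_m$ are built independently of $\alpha(t)$, which enters only through the forcing of the projected system~\eqref{corr2}. The paper states this in a single sentence, whereas you spell out the offline/online separation more explicitly, but there is no substantive difference.
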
 

\begin{proof}
It is enough to note that by construction the Krylov subspace matrices 
$V_m$, $\Hh_m$ and $\Hht_m$ do not depend on $\alpha(t)$.
\end{proof}

Corollary~\ref{res:no_rst} allows to significantly save 
computational work when solving~\eqref{ode} numerically.
However, for this problem we can expect that the Krylov 
methods in the current form will not be efficient.  This is
because the required simulation time $T$ is very large (typically,
several hundred time periods of $\alpha(t)$), so that the Krylov
dimension $m$ can become prohibitively large in practice.
This effect should expected to be more pronounced for the regular
Krylov method, as the SaI method is typically efficient
in the sense that the Krylov dimension required for its convergence
is often bounded.
Nevertheless, the bad expectations are confirmed 
in the numerical experiments for both regular and SaI Krylov methods.  
Therefore, in the next section we discuss ways to restart
the Krylov subspace methods.

\subsection{Krylov subspace methods with restarting}
\label{sect:Kr2}
A very large Krylov dimension $m$ slows down the Krylov method as
$m+1$ basis vectors $v_1$, \dots, $v_{m+1}$ have to
be stored and every new basis vector has to be orthogonalized
with respect to all previous vectors.
Typical approaches to cope with the slowing down in time integration problems
are restarting in time and restarting in
Krylov dimension.  In the first approach, 
the time interval of interest $[0,T]$ is divided 
in a number of shorter time intervals, where the problem~\eqref{ode}
is solved subsequently.  This approach is used, for instance,
in the elegant EXPOKIT code~\cite{EXPOKIT}.

If we implement the restarting-in-time approach in our setting 
with many different $\alpha(t)$, 
we see that the initial value vector $v$ is not zero
in the second and subsequent time subintervals.  For this reason
the initial residual is no longer of the form scalar function
times a constant vector (cf.~\eqref{rank1}).  Instead, the residual
can be shown to be of the form $Up(t)$ where $U\in\Rr^{n\times 2}$,
$p: \Rr\rightarrow\Rr^2$.
For such problems, exponential block 
Krylov methods~\cite{Botchev2013,FrommerLundSzyld2017}
can be applied.  
With this method, it is also possible to solve problems~\eqref{ode}
simultaneously for different $\alpha(t)$.
The idea is then to represent the residuals 
corresponding to all different $\alpha(t)$ in one common expression
$Up(t)$, where the number of columns $k$ in $U$ can be  hopefully kept 
restricted using the truncated SVD (singular value decomposition).
This approach is described in detail in~\cite{Hanse2017_MSc}.
Numerical results presented there show this approach 
inefficient due the growth of $k$.

Another way to keep the Krylov dimension $m$ restricted is 
restarting in Krylov dimension.  A number of strategies
for Krylov dimension restarting are 
developed~\cite{TalEzer2007,Afanasjew_ea08,Eiermann_ea2011,%
PhD_Niehoff,BGH13,GuettelFrommerSchweitzer2014}.  
Here we consider the residual based restarting~\cite{BGH13},
which is slightly different from the approach~\cite{PhD_Niehoff} 
demonstrated to work successfully for solving Maxwell's
equations~\cite{Botchev2016}.

This restarting approach is based on the observation that
the residual in the Krylov method preserves the form~\eqref{rank1} 
of the initial residual $r_0(t)$.  Indeed, relation~\eqref{resid}
shows that $r_m(t) = \alphah(t)\gh$ with
$$
\begin{aligned}
\alphah(t)=-h_{m+1,m} e_m^Tu(t), \quad
\gh =v_{m+1}
\quad &\text{(for regular Krylov method)},
\\  
\alphah(t)=\dfrac{h_{m+1,m}}{\gamma} e_m^T \tilde{H}_m^{-1} u(t),
\quad
\gh =(I+\gamma A)v_{m+1}
\quad &\text{(for SaI Krylov method)}.
\end{aligned}
$$
Hence, to restart after making $m$ Krylov steps, 
we carry the update~\eqref{upd1}, set
\begin{equation}
\label{rst}  
y_0(t):= y_m(t), \qquad r_0(t):= r_m(t)=\alphah(t)\gh,
\end{equation}
discard the computed matrices $V_{m+1}$, $\Hh_m$ (or $\Hht_m$)  
and start the Arnoldi process again, with the first Krylov
basis vector $v_1=\gh/\|\gh\|$.
The method then proceeds as given by 
relations~\eqref{corr1}--\eqref{corr2}, with
$\alpha(t)$ and $g$ replaced by  $\alphah(t)$ and $\gh$,
respectively.
After making another $m$~Krylov steps, we can restart again.
The following result holds.

\begin{coroll}
\label{res:rst}
The restarted regular and SaI Krylov subspace 
methods~\eqref{corr1}--\eqref{corr2},\eqref{rst}
for solving the multi-frequency optical response problem~\eqref{ode}
are fully independent of the source time component $\alpha(t)$.  

In other words, if the problem~\eqref{ode} has to be solved for
many different $\alpha(t)$, the matrices $V_m$ and either 
$\Hh_m$ or $\Hht_m$ can be computed, at each restart, once and 
used for all $\alpha(t)$.  Only the small projected 
problem~\eqref{corr2} has to be solved, at each restart and 
for each new $\alpha(t)$, again.  
\end{coroll}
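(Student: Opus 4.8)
The plan is to obtain this as an inductive extension of Corollary~\ref{res:no_rst} over the successive restart cycles, the induction being carried on the single assertion that the Arnoldi starting vector at the beginning of each cycle is independent of $\alpha(t)$. The mechanism is already visible in the restart prescription~\eqref{rst}: the residual handed from one cycle to the next factors as $r_m(t)=\alphah(t)\gh$, in which the scalar $\alphah(t)$ absorbs the entire dependence on $\alpha(t)$ while the vector $\gh$ does not depend on $\alpha(t)$ at all. Throughout I would take the restart dimension $m$ to be a fixed algorithmic parameter common to all $\alpha(t)$, so that at each cycle a single Arnoldi decomposition is built and shared.

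For the base case I would note that the first cycle starts from $v_1=g/\|g\|$ (see~\eqref{Krylov}), which is manifestly independent of $\alpha(t)$, whereupon Corollary~\ref{res:no_rst} gives the $\alpha$-independence of $V_m$ and $\Hh_m$ (or $\Hht_m$) on that cycle directly. The inductive step is the substance: assuming the starting vector of a given cycle is $\alpha$-independent, I would read off the new stationary vector from the residual formula~\eqref{resid}, namely $\gh=v_{m+1}$ in the regular case and $\gh=(I+\gamma A)v_{m+1}$ in the SaI case. Since the Arnoldi process producing $v_{m+1}$ is driven solely by $A$ and the current starting vector through the decompositions~\eqref{Arn}, and since $A$ and the shift parameter $\gamma$ are fixed, the vector $\gh$ inherits $\alpha$-independence from the current starting vector. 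The next cycle therefore restarts from $v_1=\gh/\|\gh\|$, again independent of $\alpha(t)$, which closes the induction. The conclusion is then immediate: on every cycle the whole Arnoldi decomposition, together with $V_m$ and $\Hh_m$ (or $\Hht_m$), is $\alpha$-independent, and $\alpha(t)$ enters only through the forcing $\alpha(t)\beta e_1$ of the small projected system~\eqref{corr2}, whose solution $u(t)$ is recomputed per $\alpha(t)$.

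I do not expect a genuine technical obstacle; the one point deserving care, and the real content of the argument, is the clean separation in~\eqref{resid} between the $\alpha$-dependent scalar $\alphah(t)$ and the $\alpha$-independent vector $\gh$. It is precisely because the restart reuses $\gh$, rather than the full residual, as the new search direction, and confines all $\alpha$-dependence to the scalar forcing of the projected problem, that the per-cycle Arnoldi work can be shared. Verifying that this factorization persists across restarts is the whole difficulty, and the residual formula~\eqref{resid} supplies exactly that.
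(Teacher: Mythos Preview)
Your proposal is correct and follows essentially the same approach as the paper: both rest on the observation that the residual factors as $r_m(t)=\alphah(t)\gh$ with $\gh$ built from $v_{m+1}$ (and, in the SaI case, $A$ and $\gamma$), hence $\alpha$-independent by Corollary~\ref{res:no_rst}, so that the next Arnoldi cycle starts from an $\alpha$-independent vector. The paper's proof states this only for the second restart and leaves the extension to all cycles implicit, whereas you spell out the full induction; this is a cosmetic difference in presentation, not in substance.
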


\begin{proof}
Observe that, as Corollary~\ref{res:no_rst} states,
the vector $v_{m+1}$ in both the regular and SaI Krylov
methods is independent of $\alpha(t)$.  Hence, so is the
vector~$\gh$.  Therefore, the second restart starts 
with $r_0(t):=\alphah(t)\gh$, where only $\alphah(t)$
depends on $\alpha(t)$.
\end{proof}

Note that we have to implement this restarting algorithm 
in such a way that the first restart is made for all 
the functions $\alpha(t)$, then the second restart for all the 
functions $\alphah(t)$, etc.
Otherwise (i.e., if first all the restarts were made for the first $\alpha(t)$,
then all the restarts for the second $\alpha(t)$, etc.), we would 
need to keep all Krylov bases from all the restarts.
The algorithm involves sampling and storing, for each $\alpha(t)$, 
the scalar functions $\alphah(t)$ at the end of each restart.
We outline the algorithm in Figure~\ref{alg:rst}.

\begin{figure}
  \centerline{\begin{minipage}{0.8\linewidth}
      \begin{enumerate}
      \item[(0)] 
        Set $y(T):=0$.
      \item[(1)]
        Carry out $m$ steps of the Arnoldi algorithm, compute the 
        matrices $V_{m+1}$ and $H_m$ (as given by~\eqref{saiHm}).
        \item[(2)] For each frequency $\omega$:\\
          (a) solve the projected problem~\eqref{corr2}; \\
          (b) sample 
          for $t\in[0,T]$ and store 
          $\alpha(t):=\frac{h_{m+1,m}}{\gamma} e_m^T \tilde{H}_m^{-1} u(t)$.
          \\
          (c) stop computations for $\omega$ values for which 
          the residual norm $|\alpha(t)|\|(I+\gamma A)v_{m+1}\|$
          is small enough.
        \item[(3)] 
          Carry out the update~\eqref{upd1}: $y(T):=y(T) + V_m u(T)$, 
          set $g:=(I+\gamma A)v_{m+1}$.
          Go to step~(1).
      \end{enumerate}
  \end{minipage}}
  \caption{Restarted Krylov subspace 
    method~\eqref{corr1}--\eqref{corr2},\eqref{rst} for solving~\eqref{ode} 
    for different frequencies $\omega$ in the source term $g=\alpha(t)g$.
    The algorithm is given for the SaI variant of the method.}
  \label{alg:rst}
\end{figure}

\subsection{Using the periodicity of the source function}
\label{sec:Kr3}
In this section we discuss two ways to make the numerical solution
of~\eqref{ode} more efficient by exploiting the time periodicity
of the source function $\alpha(t)g$.

\subsubsection{Krylov subspace methods with asymptotic correction}
\label{sect:Kr3}
Recall that the eigenvalues of $A$ have nonnegative real part and
the time interval of interest $[0,T]$ is large in the sense that the 
output light has reached a time-periodic ``steady'' state at time $T$. 
Therefore, we may hope to improve the Krylov subspace approximations by
splitting off this time-periodic part of the solution.

\begin{coroll}
The solution $y(t)$ to the problem~\eqref{ode} can be written
as 
\begin{equation}
\label{split1}  
y(t) = \yt(t)\replaced{-}{+} \yh(t), \qquad 
\yt(t)=\Im(e^{i2\pi\omega t}\zstat),\quad \yh(t)=e^{-tA} \Im\zstat,
\end{equation}
where $\zstat = (A+i2\pi\omega I)^{-1}g$
and $\Im z$ denotes the imaginary part of a vector $z\in\Cc^n$.
\end{coroll}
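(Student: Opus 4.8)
The plan is to verify the decomposition directly: I would exhibit $\yt(t)$ as a particular solution of the inhomogeneous ODE in~\eqref{ode} and $-\yh(t)$ as the homogeneous correction that enforces the initial condition, then invoke uniqueness. The first step is to complexify the source. Since $\alpha(t)=\sin(2\pi\omega t)=\Im(e^{i2\pi\omega t})$ and both $A$ and $g$ are real, it suffices to solve the complex problem $z'(t)=-Az(t)+e^{i2\pi\omega t}g$ and afterwards take imaginary parts: if $z$ solves this, then $\Im z$ solves $y'=-Ay+\alpha(t)g$, because $A(\Im z)=\Im(Az)$ and $\Im(e^{i2\pi\omega t}g)=\alpha(t)g$ for real $A$, $g$.

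Next I would seek a time-periodic (steady-state) particular solution of the complex problem in the form $z(t)=e^{i2\pi\omega t}\zstat$ with constant $\zstat$. Substituting yields $i2\pi\omega\,\zstat=-A\zstat+g$, that is $(A+i2\pi\omega I)\zstat=g$, which recovers $\zstat=(A+i2\pi\omega I)^{-1}g$ exactly as stated. Taking imaginary parts then identifies $\yt(t)=\Im(e^{i2\pi\omega t}\zstat)$ as a particular solution with $\yt'(t)=-A\yt(t)+\alpha(t)g$, but whose value at the origin, $\yt(0)=\Im\zstat$, does not match the initial datum $v=0$ in general.

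To repair the initial condition I would add a solution of the homogeneous equation. The function $\yh(t)=e^{-tA}\Im\zstat$ satisfies $\yh'(t)=-A\yh(t)$ and $\yh(0)=\Im\zstat$. Setting $y(t)=\yt(t)-\yh(t)$ then gives $y'(t)=-A\yt(t)+\alpha(t)g+A\yh(t)=-Ay(t)+\alpha(t)g$ and $y(0)=\Im\zstat-\Im\zstat=0=v$, so $y$ solves~\eqref{ode}. Since the linear initial value problem has a unique solution, this $y$ coincides with the solution of~\eqref{ode}, which establishes~\eqref{split1}.

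The only genuine subtlety is the well-definedness of $\zstat$, i.e.\ the invertibility of $A+i2\pi\omega I$. This is where the spectral assumption on $A$ enters: as the eigenvalues of $A$ have nonnegative real part, the matrix $A+i2\pi\omega I$ can be singular only in the resonant case where $A$ possesses the purely imaginary eigenvalue $-i2\pi\omega$; away from this resonance (physically excluded by the PML damping, which pushes the spectrum into the open right half-plane), $\zstat$ is well defined. I expect this invertibility check to be the main point to state carefully, since the remainder of the argument reduces to differentiation and matching of initial data.
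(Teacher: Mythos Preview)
Your proof is correct and proceeds by a somewhat different route than the paper. The paper starts from the variation-of-constants formula $y(t)=\int_0^t e^{(s-t)A}\sin(2\pi\omega s)\,g\,ds$, complexifies to $z(t)=\int_0^t e^{(s-t)A}e^{i2\pi\omega s}g\,ds$, and then evaluates this integral explicitly via the antiderivative of $s\mapsto e^{s(A+i2\pi\omega I)}$, obtaining $z(t)=e^{i2\pi\omega t}\zstat-e^{-tA}\zstat$ and hence~\eqref{split1} by taking imaginary parts. You instead posit the steady-state ansatz $e^{i2\pi\omega t}\zstat$, verify it is a particular solution, subtract the homogeneous solution matching the initial datum, and invoke uniqueness of the linear initial value problem. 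Your argument is shorter and sidesteps the matrix-exponential integration; the paper's computation is constructive and \emph{derives} the formula rather than merely checking it. Your explicit remark on the invertibility of $A+i2\pi\omega I$ is a useful addition that the paper leaves implicit (and is needed equally in the paper's proof, where the antiderivative step relies on it).
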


\begin{proof}
Since the source term is $\alpha(t)g$, with $\alpha(t) = \sin(2\pi\omega t)$
(cf.~\eqref{ode},\eqref{src}), the variation of constants formula (see 
e.g.~\cite{HundsdorferVerwer:book}) allows us to write  
the exact solution of~\eqref{ode} as 
$$
y(t) = \underbrace{e^{-tA}v}_{=0}{} + \int_0^t e^{(s-t)A}\sin(2\pi\omega s) g\, ds,
$$
\added{where the first term drops out because $v=0$, cf.~\eqref{ic1}.
Let us consider a function}
$$
z(t) = \int_0^t e^{(s-t)A}e^{i2\pi\omega s} g\, ds, \qquad i^2 = -1,
$$ 
\added{introduced in such a way that $y(t)=\Im z(t)$.}
This function can be brought to the form
$$\begin{aligned}
z(t) &= e^{-tA} \left[\int_0^t e^{s(A+i2\pi\omega I)} \, ds \right] g =
e^{-tA} \left. (A+i2\pi\omega I)^{-1} \Bigl[e^{s(A+i2\pi\omega I)} 
\Bigr]\right|_0^t g =
\\
&= e^{-tA} (A+i2\pi\omega I)^{-1} \Bigl[e^{t(A+i2\pi\omega I)}-I \Bigr] g =
e^{-tA} \Bigl[e^{t(A+i2\pi\omega I)}-I \Bigr] (A+i2\pi\omega I)^{-1}g =
\\
&= \Bigl[e^{i2\pi\omega t}-e^{-tA} \Bigr] (A+i2\pi\omega I)^{-1}g =
e^{i2\pi\omega t}\zstat - e^{-tA}\zstat.
\end{aligned}$$
Hence, we arrive at~\eqref{split1}.
\end{proof}

Note that
the first term $\yt(t)$ in~\eqref{split1} can be identified
as the time-periodic part of the solution, while the second one
$\yh(t)$ solves a homogeneous initial-value problem 
\begin{equation}
\label{y_homo}
\yh(t)' = -A\yh, \qquad \yh(0) = \Im\zstat.  
\end{equation}
Furthermore, we note that $\yt(t)$ is easy to compute and that 
solving the homogeneous problem~\eqref{y_homo} is in general an
easier task than solving an initial value problem for 
the inhomogeneous ODE system~\eqref{ode}.  
Indeed, the former problem is equivalent to evaluating the matrix exponential 
action which is, in general, cheaper than solving an ODE system~\cite{19ways}.
This is why evaluating $y:=e^{A}b$ for a given vector $b$ is often a subtask
in modern time integrators~\cite{HochbruckOstermann2010}.
Moreover, the Krylov subspace methods are known to work
well for problems of the type~\eqref{y_homo}, and this has been
used in the literature, see e.g.~\cite{PARAEXP}.

Another argument in favor of solving~\eqref{ode} by applying 
the splitting~\eqref{split1} is that we hope that $\yh(t)$ should
decay asymptotically and, hence, the solution $y(t)$
for large times should be well approximated by $\yt(t)$.  This 
hope is confirmed in practice (see Section~\ref{sect:experim}).
The Krylov subspace method applied to solve~\eqref{y_homo}
can easily be restarted in time and at every restart the norm
of the initial value vector is expected to be smaller.
Hence, the residual should be smaller and the Krylov subspace 
methods require less steps to converge.

To get the solutions for the other $\alpha(t)$ we first note that
$\zstat$ can be found simultaneously for many frequencies~$\omega$,
see e.g.~\cite[Section~4.3.1]{Hanse2017_MSc}. 
We do not discuss this further in this paper because, as compared to 
other costs, solving a system for $\zstat$ is very cheap anyway.
Furthermore, assume a solution $\yh_{\omega}(t)$ is found for a certain
frequency $\omega$.  Usually, a certain frequency range should
be scanned which means that the next frequency of interest 
$\omega_{\mathrm{next}}$
is rather close to $\omega$.  We first compute $\yt(t)$
for the new value $\omega_{\mathrm{next}}$.  Note that we can find 
$\yh_{\omega_{\mathrm{next}}}(t)$ 
as $\yh_{\omega_{\mathrm{next}}}(t) = \yh_{\omega}(t) + w(t)$, where $w(t)$ 
solves 
the problem
\begin{equation}
\label{4diff}
w'(t)=-Aw(t), \qquad w(0) = \yh_{\omega_{\mathrm{next}}}(0) -\yh_{\omega}(0).
\end{equation}
The Krylov subspace methods applied to this problem is likely to
require less iterations than for solving~\eqref{y_homo}
provided that 
$\|\yh_{\omega_{\mathrm{next}}}(0) 
-\yh_{\omega}(0)\|<\|\yh_{\omega_{\mathrm{next}}}(0)\|$.

\begin{figure}
\centerline{
  \begin{minipage}{0.8\linewidth}
For $\omega=\omega_1$:
\begin{enumerate}
  \item Compute $\zstat$\added{, $\yt(T)$} as given by~\eqref{split1}.
  \item Solve~\eqref{y_homo}, find~$\yh(t)$.
  \item Store 
    $\yh_\omega(0):= \Im\zstat$ and $\yh_\omega(T):=\yh(T)$.\\
    Compute the solution $y(T)=\yt(T) \replaced{-}{+} \yh(T)$.
\end{enumerate}
  For $\omega_{\mathrm{next}}$=$\omega_2$, $\omega_3$, \dots:
\begin{enumerate}
\setcounter{enumi}{3}
  \item Compute $\zstat$\added{, $\yt(T)$} as given by~\eqref{split1}, set
        $\yh_{\omega_\mathrm{next}}(0):=\Im \zstat$.
  \item Solve~\eqref{4diff}, find~$w(T)$, set 
    $\yh_{\omega_\mathrm{next}}(T) := \yh_{\omega}(T) + w(T)$.
  \item
    Update 
    $\yh_\omega(0):=\yh_{\omega_\mathrm{next}}(0)$ and 
$\yh_\omega(T):=\yh_{\omega_\mathrm{next}}(T)$.
    \\
    Compute the solution $y(T)=\yt(T) \replaced{-}{+} \yh_{\omega_\mathrm{next}}(T)$.
\end{enumerate}
  \end{minipage}}  
  \caption{Krylov subspace method based on the asymptotic 
splitting~\eqref{split1}}
  \label{alg:spl1}
\end{figure}

The algorithm for the Krylov subspace method with asymptotic splitting
is outlined in Figure~\ref{alg:spl1}.
It is important to note that solution of the homogeneous 
ODE systems~\eqref{y_homo} and~\eqref{4diff} at steps~2 and~4,
respectively, can be carried out with any restarting in time 
and in Krylov dimension.  This freedom is used by us 
to obtain an efficient time integrator in Section~\ref{sect:experim}.

\subsubsection{Splitting off the source term}
The second approach we consider here to exploit the time periodicity 
of the source function $\alpha(t)g$ in~\eqref{ode} is as follows.
We solve the problem~\eqref{ode} successfully on subintervals
$[0,\Delta T]$, $[\Delta T, 2\Delta]$, \dots{} (in other words,
we apply restarting in time).  To solve~\eqref{ode} on each
subinterval~$[(k-1)\Delta T,k\Delta T]$, $k=1,2,\dots$, 
we split the problem into two subproblems:
\begin{equation}
\label{split2}
\left\{
\begin{aligned}
w_1'(t) &= -Aw_1, 
\\
w_1(0)  &= y((k-1)\Delta T),
\end{aligned}
\right.  
\qquad\qquad
\left\{
\begin{aligned}
w_2'(t) &= -Aw_2 +\alpha(t)g, 
\\
w_2(0)  &= 0,
\end{aligned}
\right.  
\end{equation}
so that the solution for $t=k\Delta T$ is determined as 
$y(k\Delta T) = w_1(k\Delta T) + w_2(k\Delta T)$.
Note that the second problem has
the same solution for every subinterval
if $\alpha(t)$ is periodic (which is the case for our
problem) and we choose $\Delta T$ to be a multiple the time
period.  Thus the problem for~$w_2(t)$ has to be solved
for the first subinterval only. For the other subintervals,
it suffices only to solve the problem for~$w_1(t)$.  
This approach seems profitable because we again 
only have to solve a homogeneous problem, i.e., the matrix
exponential action has to be computed. 

\section{Numerical experiments}
\label{sect:experim}
\subsection{Test problem and implementation details}
Here we briefly discuss some implementation aspects important for 
successful application of Krylov subspace methods.
In all the experiments we use the SaI version of the Krylov
subspace method.  The regular Krylov subspace method appears to
be inefficient for this problem as too many Krylov iterations
are needed for its convergence.

All the numerical experiments presented in this paper are carried
out in Matlab on Linux PC with 8~CPUs and 32~Gb of memory.
To solve the linear systems with the matrices $(I+\frac\tau2 A)^{-1}$
in the ITR scheme  
and $(I+\gamma A)^{-1}$ in the SaI Krylov method,
Matlab's sparse direct LU~factorization (based on the 
UMFPACK~\cite{UMFPACK2}) is used.
The factorization can be carried out efficiently because the
problem is two-dimensional.
The factorization is computed once and used every time the action
of the inverse matrices is needed.

We consider the test case where the domain is
$$
(x,y)\in [-3,33]\times [0,10],
$$
and at $y=0$ and $y=10$ the perfectly conducting
boundary conditions are posed.  At $x=0$ and $x=30$
the PML boundary conditions are posed, with the PML
regions being $x\in[-3,-2]$ and $x\in[32,33]$.
The values of $\sigma_{x}$ grow in the PML regions 
from $0$ to $\sigma_{\max}=20$ quadratically. 
The final time is set to be $T=500$.

The region $[0,30]\times [0,10]$ contains $750$~cylinders 
of radius $0.1$ which are scattered randomly with
a minimum distance $0.2$ to the domain boundaries and 
$0.25$ from each other.  The electric permittivity
values are  
$\epsilon_r=1$ inside the cylinders and $\epsilon_r=2.25$ in the rest
of the domain (see Figure~\ref{fig:dom}).

\begin{figure}
\includegraphics[width=\textwidth]{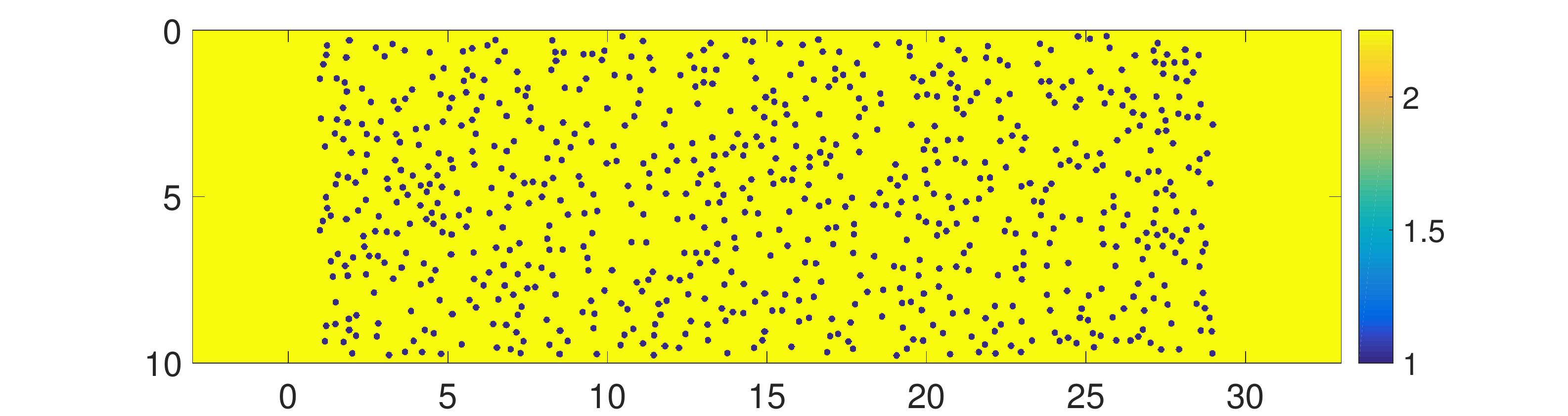}
\caption{The spatial domain is a waveguide with $750$ randomly positioned 
infinite cylinders}  
\label{fig:dom}
\end{figure}

The vector $g$ in the source function $\alpha(t)g$ is zero
everywhere in the domain except at the line $x=-2$, $y\in[0,10]$
where it is~$1$ for $y\in[1,9]$ and it decays linearly from 
$1$ to $0$ for $y\in[0,1]$ and $y\in[9,10]$.  The time component
$\alpha(t)$ of the source function is defined as 
$\alpha(t)=\sin2\pi\omega t$ with $\omega\in[0.85,1.15]$.

The highest resolution which can be used for this domain size 
is 64 grid points per unit length ($4\,498\,307$ , which means that every 
cylinder
is represented by approximately $12\times 12$ grid points.
Although this rather rough resolution is sufficient for simulation
purposes, to have consistent results for all mesh resolutions
we regularize the values of $\epsilon_r$ by applying
a standard homogenization procedure as follows.  Let 
$(\epsilon_r^0)_{i,j}$ represent the original piecewise 
constant values of $\epsilon_r$  at the mesh points $(i,j)$
for the mesh resolution~256 points per unit length (this 
resolution is too high to be used for the whole simulation). 
Then, we update 
$$
(\epsilon_r^{k+1})_{i,j} = 
\frac12 (\epsilon_r^k)_{i,j} +
\frac12 \frac{(\epsilon_r^k)_{i-1,j} + (\epsilon_r^k)_{i,j-1} + 
              (\epsilon_r^k)_{i,j+1} + (\epsilon_r^k)_{i+1,j}}4
$$
iteratively for $k=0,1,2,\dots$ until the iterations stagnate
(at $k\approx 200$).
The resulting values of $\epsilon_r$ are then interpolated
onto the coarser meshes used in simulations.
\replaced{Similar homogenization procedures are also employed in FDTD codes, such as
MEEP~\cite{Oskooi_ea2009meep}.}{For another type of homogenization used in an FDTD code 
Meep see~\cite{Oskooi_ea2009meep}.}

The matrix $H_m$ produced by the regular Krylov method is 
a Galerkin projection of the matrix $A$, which means that
$H_m=V_m^T A V_m$ and the field of values of $H_m$ is
a subset of the field of values of $A$.
However, for the SaI Krylov method the matrix $H_m$
results from the inverse SaI transformation~\eqref{saiHm}
and therefore can have spurious eigenvalues.
This can be especially pronounced for the matrices $A$ with
purely imaginary eigenvalues (or eigenvalues with very small
real part), as is the case for our problem~\eqref{ode}, 
see e.g.~\cite{Botchev2016}.
Indeed, if $A$ has a purely imaginary eigenvalue $\lambda=iy$, $y\in\Rr$,
then $\tilde{H}_m$ can have eigenvalues \emph{approximating}
$\tilde{\lambda}=(1+\gamma\lambda)^{-1}=(1+i\gamma y)^{-1}$.  
The inverse SaI transformation of this approximate $\tilde{\lambda}$
may have a small real part of a negative sign, especially for large
$\gamma y$.
In practice, this spurious eigenvalues with a small real part of the wrong
sign can be replaced by the eigenvalues with zero real part.
More precisely, once $H_m$ is computed according to~\eqref{saiHm},
we compute its Schur decomposition $H_m=QTQ^*$ and replace
the small negative diagonal entries in $T$ (if any) by zero.
After that, we set $H_m=QTQ^*$ where $T$ is now the corrected
matrix.

We choose the parameter $\gamma$ in the SaI Krylov subspace method
by making cheap trial
runs on a coarse mesh (in this case with resolution $16$~points
per unit length) and using the chosen $\gamma$ for all the
meshes~\cite{Botchev2016}.
This is possible because the SaI methods usually exhibit 
a mesh independent 
convergence~\cite{EshofHochbruck06,GoecklerGrimm2014,Botchev2016}.
We also look at the number and size of the spurious eigenvalues
in the back SaI transformed matrix $H_m$.  
Typically we see that too many wrong eigenvalues can 
appear for large $\gamma$, so that $\gamma$ should be chosen 
not too large.
Table~\ref{t1} shows
the data of the test runs carried out to choose $\gamma$.  Based 
on the data, for this particular case, we set $\gamma:=0.0005 T = 0.25$.

\begin{table}
\caption{The residual norm $\|r_m(T)\|$
(as defined by~\eqref{resid})
for different values of $\gamma$ with resolution 16 and $\omega = 1$
and $m=400$ iterations. 
The $*$ sign means that some spurious eigenvalues have been detected and 
corrected, 
$**$ means that the value of $\gamma$ is unacceptable as 
there are too many or too large spurious eigenvalues.}
\label{t1}
\centering
\[
\begin{array}{|l|l|} \hline
\frac{\gamma}{T} & \text{residual norm} \\ \hline\hline
0.1$*$ & 1.63 \e{0} \\ \hline
0.05 & 1.60 \e{0} \\ \hline
0.025$*$ & 1.47 \e{0} \\ \hline
0.01 & 1.54 \e{0} \\ \hline
0.005 & 1.36 \e{0} \\ \hline
0.0025 & 1.39 \e{0} \\ \hline
0.001$*$ & 1.10 \e{0} \\ \hline
0.0005$*$ & 9.15 \e{-1} \\ \hline
0.00025$**$ & 7.60 \e{-5} \\ \hline
\end{array}
\]
\end{table}

To solve the projected ODE system~\eqref{corr2}, we use 
one of the standard Matlab's built-in stiff ODE solvers.
It is important to use a stiff solver due to the PML
boundary conditions and because the inverse SaI 
transformation~\eqref{saiHm} can yield large eigenvalues in $H_m$.
In case a homogeneous projected ODE system is solved,
i.e., $u'(t)=-H_mu(t)$, 
its solution is computed with the help of Matlab's matrix exponential
function \texttt{expm} (see e.g.~\cite{Higham_bookFM}).

\subsection{Numerical results and discussion}
We now compare the presented methods for the highest 
grid size possible on the PC we have, namely 64 grid points
per unit length.  On this mesh the size of the system~\eqref{ode} to
be solved is $n=4\,498\,307$.  Since the spatial error is expected
to be of order $(1/64)^2$, we should aim at the temporal error of
the same order.  Therefore, we set the residual tolerance in the
Krylov methods to $10^{-4}$.  Recall that for the ITR method there is no 
residual
available and, hence, its temporal error can not be easily controlled.
Due to wave character of the problem, we expect that the time step 
$\tau$ in ITR should be at least of the same order as the spatial grid step,
i.e., approximately $1/64$.  

To check the actual accuracy achieved
by the methods under comparison, we report the relative temporal error
computed with respect to a reference solution $y_{\text{ref}}(t)$, i.e.,
$\|y(T)-y_{\text{ref}}(T)\|/\|y_{\text{ref}}(T)\|$.  The reference
solution is computed by the standard FDTD method ITR with a tiny
time step size.  Thus, this reference solution is expected to
have the same spatial error as the methods to be compared.
Hence, $y(T)-y_{\text{ref}}(T)$ can effectively be seen as the temporal 
error.

We start with examining the performance of the ITR scheme,
see Table~\ref{t:ITR}.  Next, Table~\ref{t:Kr2} shows
the results for the Krylov subspace 
method~\eqref{corr1}--\eqref{corr2},\eqref{rst}
with restarting in dimension, as presented in Section~\ref{sect:Kr2}.
In the table, we also show the results for the much
coarser spatial mesh to demonstrate that the convergence
of the method does not deteriorate as the mesh gets
finer.  Due to the independency of the method on the
source time component $\alpha(t)$ (cf.~Corollary~\ref{res:rst}), 
the CPU time required by the method for any other frequency
from the range of interest is the CPU time needed for the
projected ODE system, i.e., $6.21 \e{3}$~s.
Thus, for the second and subsequent frequencies, 
the gain we obtain with respect to the ITR scheme is
a factor of $1.32 \e{5} /6.21 \e{3}\approx 20$.
A drawback of this method is that the used restart value $m_{\max}=400$ is 
very large, a larger restart value would hardly be
possible due to computer memory limitations.  For 
$m_{\max}=200$ no convergence is observed.

\begin{table}
\caption{Performance of the ITR scheme}
\label{t:ITR}
\centering
\[ \begin{array}{|l|l|l|} \hline
\tau & \text{relative error} & \text{CPU time} \\ \hline\hline
\tau=\Delta x = 1/64 & 4.12 \e{-1} & 1.84 \e{4} \\ \hline
\tau=\Delta x/2      & 1.07 \e{-1} & 3.74 \e{4} \\ \hline
\tau=\Delta x/8      & 6.33 \e{-3} & 1.32 \e{5} \\ \hline
\end{array} \]
\end{table}

\begin{table}
\caption{Results for 
the Krylov subspace method~\eqref{corr1}--\eqref{corr2},\eqref{rst}
restarted in dimension every $m_{\max}=400$ Krylov steps.}
\label{t:Kr2}
\centering
\[
\begin{array}{|l|l|l|l|l|l|} \hline
\text{resolution} &  \# \text{restarts} & \text{residual} & \text{relative} & 
\text{total CPU} & \text{CPU time for}  \\
 &  & \text{norm} & \text{relative} & \text{time} & \text{projected ODE}  \\ 
\hline\hline
16 & 14 & 4.88 \e{-5} & - & 2334.5 & 1252.0 \\ \hline
64 & 12 & 1.53 \e{-5} & 5.76 \e{-3} & 3.09 \e{4} & 6.21 \e{3} \\ \hline
\end{array}
\]
\end{table}

We now present the results for the two methods based on the 
splittings~\eqref{split1} and~\eqref{split2}, respectively.
In this methods, we are free to use both restarting in time
and in space.  By making cheap trial runs on coarse meshes,
we choose the subinterval length for restart in time to be~$1$.
After that, the parameter $\gamma$ is chosen as explained
above, resulting in the value $\gamma=0.01$.  With these 
parameters, the Krylov subspace dimension has not 
exceeded~$25$ throughout restarts.

The results are given in Table~\ref{t:Kr3}.
First of all, we demonstrate there that the time-periodic asymptotic solution 
$\yt(t)$ is a good approximation to the solution $y(t)$,
it is even more accurate than ITR with the time step $\tau=\Delta x$.
However, the accuracy of this solution is not of the order
of the spatial error, which may not be enough.
Comparing the results for the methods based on the 
splittings~\eqref{split1} and~\eqref{split2}, we see
that splitting~\eqref{split1} outperforms the other splitting.
For both splittings a homogeneous problem of the type~\eqref{y_homo}
has to be solved.  The difference in performance is because
the Krylov subspace method significantly profits from the fact that
$\yh(t)$ gets smaller in norm as time grows.
Indeed, a small \added{in} norm initial value means a small initial
residual (cf.~\eqref{err_res},\eqref{rank1} with $g=0$ and a small
in norm $v$) and, hence, less steps to satisfy the required
tolerance, see Figure~\ref{fig:normy}.
For the same reason, combination of this splitting
with~\eqref{4diff} turns out to be successful as well.

\begin{figure}
\includegraphics[width=0.5 \textwidth]{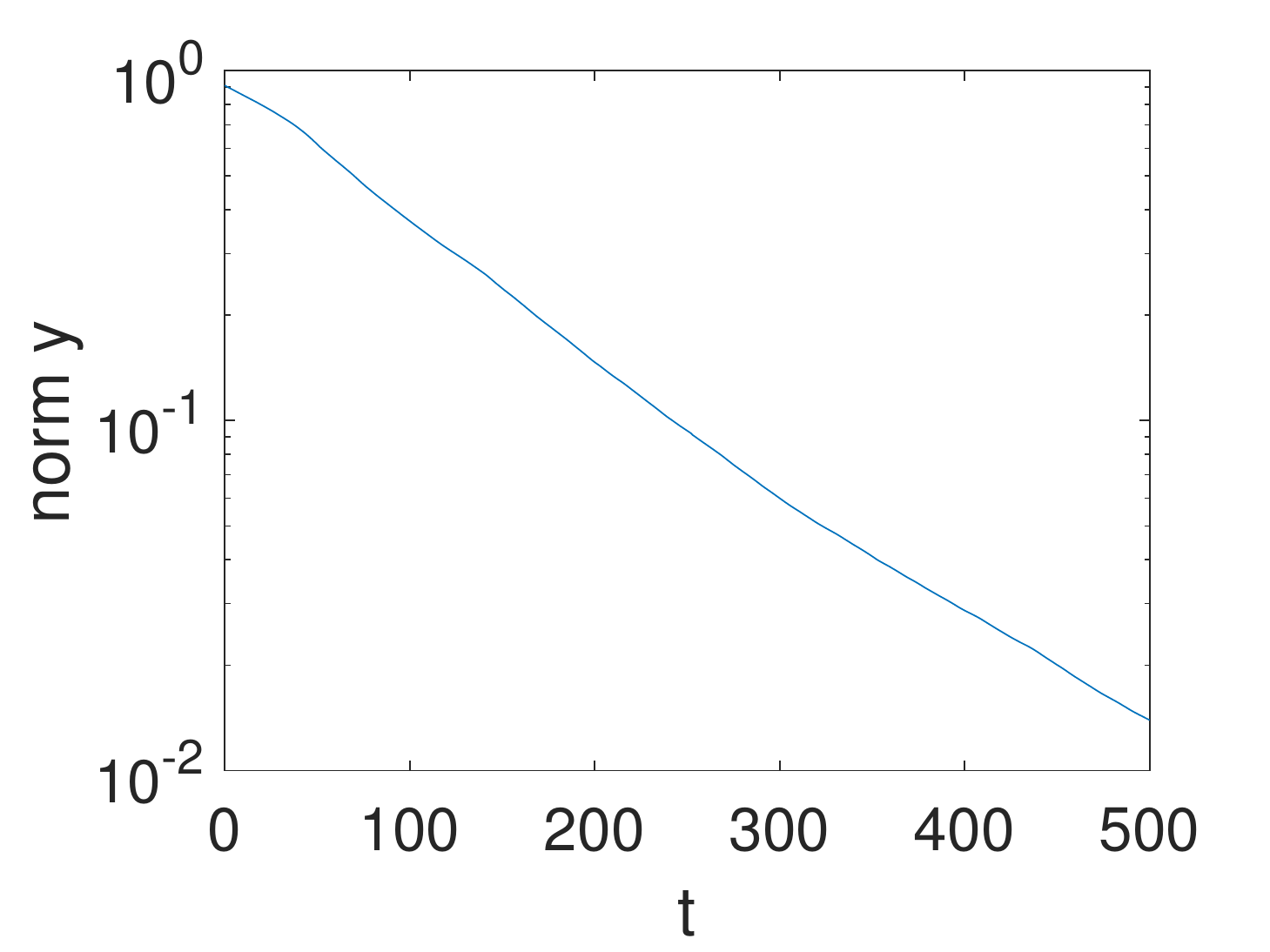}
\includegraphics[width=0.5 \textwidth]{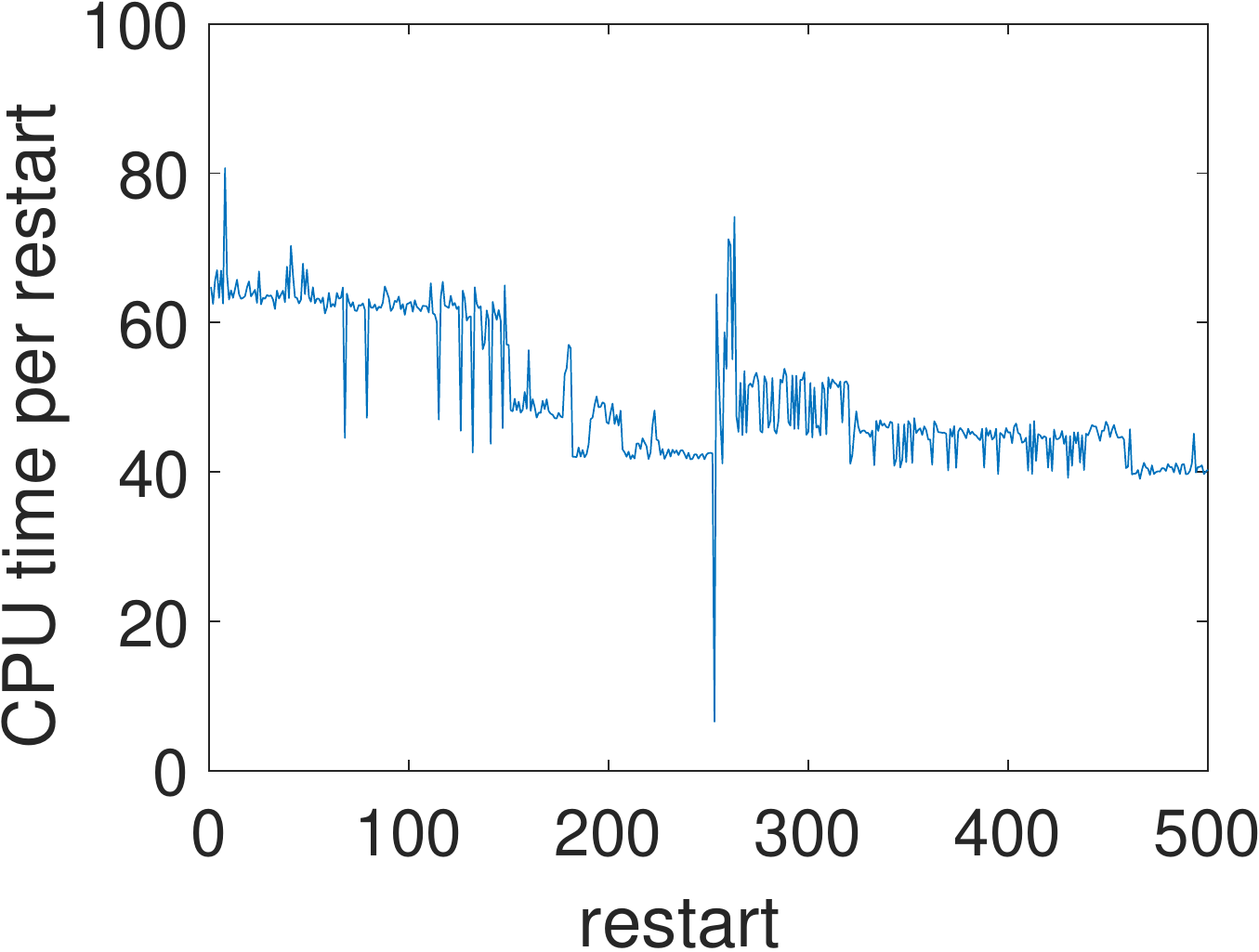}  
  \caption{Left plot: $\|\yh(t)\|$ versus $t$.  Right plot: corresponding
CPU times versus restarts in time.}
\label{fig:normy}
\end{figure}

As shown, the method based on the splitting of
the time-periodic asymptotic solution seems very promising:
a significant gain factor ($\sim$ 13) with respect to the ITR method is 
achieved without utilizing a lot of memory. 

\begin{table}
\caption{The performance of the two Krylov subspace solvers 
from Section~\ref{sect:Kr3}}
\label{t:Kr3}
\centering
\begin{equation*} 
\begin{array}{|l|l|l|} \hline
\text{method} & \text{relative error} & \text{CPU time} \\ \hline\hline
\text{time-periodic asymptotic solution $\yt(t)$} & 1.60 \e{-2} & 4.95 \e{1} \\ 
\hline
\text{splitting $\yt(t)+ \yh(t)$, cf.~\eqref{split1}} & 6.39 \e{-3} & 2.53 
\e{4} \\ \hline
\text{splitting $\yt(t)+ \yh(t)$, next $\omega_{\mathrm{next}}=\omega+0.001$} & 
- & 9.75 \e{3} \\ \hline
\text{splitting $w_1(t)+ w_2(t)$, cf.~\eqref{split2}} & 7.08 \e{-3} & 4.01 
\e{4} \\ \hline
\end{array}
\end{equation*}
\end{table}

Finally, in Table~\ref{t:fin} we collect the results for the most
promising methods and for the reference method ITR (run 
with the time step providing the required time accuracy 
$O(\Delta x)^2$).
As shown for a single frequency, the Krylov subspace method
based \added{on the} time-periodic asymptotic splitting is the fastest
and provides a gain factor of $5.2$ with respect to the 
reference method.  At other frequencies, its gain
factor of $13.5$ is lower that the gain factor of the 
Krylov subspace method with restarting in dimension ($21.3$).

\begin{table}
\caption{Results for the most promising Krylov subspace methods 
and for the reference method ITR (gathered from 
Tables~\ref{t:ITR}--\ref{t:Kr3})}
\label{t:fin}
\[
\begin{array}{|l|l|l|l|l|} \hline
\text{Method} & \text{Error} & \text{CPU time} & \text{Gain} & \text{Notes} 
\\
              &              &                 & \text{factor} & 
\\ \hline\hline
\text{ITR, }\tau=\Delta x/8 & 6.33 \e{-3} & 1.32 \e{5} & 1 & \text{reference 
method} \\ \hline
\text{Krylov, splitting~\eqref{split1}} & 6.39 \e{-3} & 2.53 \e{4} & 5.22 &  \\ 
\text{Krylov, splitting~\eqref{split1}}, \omega_{\mathrm{next}} & - & 9.75 
\e{3} & 13.52 & \omega-\omega_{\mathrm{next}}=0.001 \\ \hline
\text{Krylov, restarts in dimension} & 5.76 \e{-3} & 3.09 \e{4} & 4.27 & 
\text{high memory} \\ 
\text{Krylov, restarts in dimension}, \omega_{\mathrm{next}} & - & 6.21 \e{3} & 
21.26 & \text{requirements} \\ \hline
\end{array}
\]
\end{table}

\section{Conclusions}
Standard Krylov subspace methods turn out to be inefficient
in solving multi-frequency optical response from a scattering medium due to the 
growth of Krylov dimension. We overcome this inefficiency through two 
restarting strategies to restrict the Krylov dimension.
The first approach is to restart in time, i.e., to use Krylov
subspace methods on successive shorter 
subintervals.  
In this approach the invariance of the Krylov subspace on the
time component $\alpha(t)$ is lost and we would need to use a block
Krylov subspace.  This approach is worked out and demonstrated to be 
inefficient in~\cite{Hanse2017_MSc} due to the growth of the block
size.  

The other restarting approach we consider is the residual based
restarting in Krylov dimension.  This approach is shown to lead
to a method where the large scale part of the computational work 
does not depend on the time component $\alpha(t)$.  As numerical 
experiments demonstrate, the new method works successfully only 
for very large Krylov dimensions.  In the tests, this method appears to
be the fastest, at the cost of high memory consumption.

To avoid the high memory requirements, we consider two other
approaches based on the splitting.  In the first one, the solution
is split into an easy-to-compute time-periodic asymptotic part and 
the remaining part which decays with time.  The Krylov subspace methods
are demonstrated to be able to compute this decaying component very
efficiently, thus providing a rigorous numerical solution
to the whole problem.  In the second approach, the periodicity
of the source term is exploited.  The problem is solved
successfully on time subintervals and, on each subinterval,
it is split into a homogeneous ODE system (i.e., with zero source term)
and an inhomogeneous ODE whose solution is the same for all
subintervals.  The first, time-periodic asymptotic splitting
appears to be more efficient and works well for multiple
frequencies.

Thus, Krylov subspace exponential integrators seem
to be a promising computational tool for modeling 
multi-frequency optical response with monochromatic sources.

\bibliographystyle{abbrv}
\bibliography{my_bib,matfun,mxw,bibdata_Ravi}

\end{document}